\newtheorem{definition}{Definition}
\newtheorem{theorem}{Theorem}
\begin{document}

\title{Unbreakable and breakable quantum censorship}

\author{Julien Pinske}
    \email{julien.pinske@nbi.ku.dk}
    \affiliation{Paderborn University, Institute for Photonic Quantum Systems (PhoQS), Theoretical Quantum Science, Warburger Straße 100, 33098 Paderborn, Germany}

\author{Jan Sperling}
    \affiliation{Paderborn University, Institute for Photonic Quantum Systems (PhoQS), Theoretical Quantum Science, Warburger Straße 100, 33098 Paderborn, Germany}

\date{\today}

\begin{abstract}
    A protocol for regulating the distribution of quantum information between multiple parties is put forward.
    In order to prohibit the unrestricted distribution of quantum-resource states in a public quantum network, agents can apply a resource-destroying map to each sender's channel.
    Since resource-destroying maps only exist for affine quantum resource theories, censorship of a nonaffine resource theory is established on an operationally motivated subspace of free states. 
    This is achieved by using what we name a resource-censoring map.
    The protocol is applied to censoring coherence, reference frames, and entanglement.
    Because of the local nature of the censorship protocol, it is, in principle, possible for collaborating parties to bypass censorship.
    Thus, we additionally derive necessary and sufficient conditions under which the censorship protocol is unbreakable.
\end{abstract}
    
\maketitle

    \section{Introduction}

    As Shor's algorithm for the efficient factoring of prime numbers exemplified \cite{S94}, quantum information can be used to break certain cryptographic schemes \cite{MV18,SC21}, being foundational for quantum \cite{GR02,PA20} and post-quantum cryptography \cite{BL17}.
    Because of the prospects that modern information societies will one day be dealing with a quantum internet \cite{K08,SE18,IC22,RZ23}, in which quantum channels of increasing complexity connect numerous senders and receivers, establishing certain restrictions on the sharing of quantum resources becomes a subject of ever increasing interest.
    
    To prevent the unregulated spreading of quantum resources, such as coherence and entanglement, to malicious parties in their preparation of cryptographic attacks on critical infrastructures, governmental agencies might try to establish a form of \textit{quantum censorship}.
    In such a protocol, quantum information which is deemed benign crosses a network unaltered while hazardous quantum information is rendered classical (Fig. \ref{fig:motiv}).
    A less dystopian---but an information-processing equivalent---scenario might be the censorship of a commercialized network, with a provider offering free transmission of classical information, but demanding premium fees for sharing of quantum information.

    In this work, we devise a protocol for such quantum censorship applications.
    The protocol is based on a network of multiple sender-receiver pairs, being controlled by some dominant, protective agency (e.g., a governmental authority, a commercial provider, etc.) that applies a resource-destroying (RD) map \cite{LH17} locally to each sender. 
    This ensures that only free states of a quantum resource theory (QRT) \cite{CG19} are transmitted over the network.
    RD maps distinguish themselves from resource-breaking \cite{HS03,IK13}, resource-annihilating \cite{MZ10}, and resource-erasing protocols \cite{GP05,SB17} in that they destroy the quantum resource but do not alter free states.
    Moreover, RD maps are single-shot operations, thus avoiding costly procedures such as tomography by the agency.

\begin{figure}[b]
    \includegraphics[width=.3\textwidth]{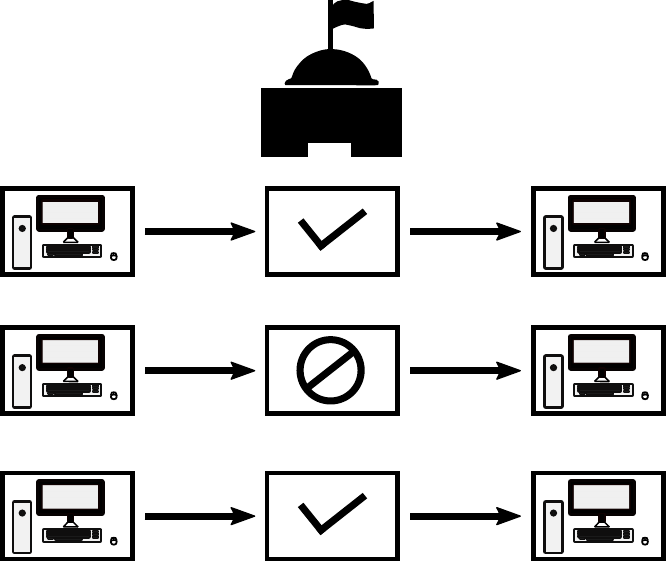}
    \caption{%
        In the quantum censorship protocol, a dominant, protective agency oversees quantum communication in a public-domain quantum network.
    }
    \label{fig:motiv}
\end{figure}
    
    The issue regarding RD maps is that they are not physically implementable for all QRTs.
    Indeed, only affine QRTs can give rise to a linear RD map;
    necessary and sufficient conditions for a QRT to have a (unique) RD map were derived in Ref. \cite{G17}.
    Examples of affine QRTs that possess RD maps include quantum coherence \cite{A06,BC14}, quantum thermodynamics \cite{BH15,SR20}, and quantum reference frames \cite{BR07,GS08}.
    On the other hand, there do not exist linear RD maps for real-valued quantum mechanics (affine) \cite{HG18,WK21}, quantum entanglement (convex) \cite{CD01,HH09,CV20}, quantum discord (nonconvex) \cite{OZ01}, and non-Gaussianity (convex \cite{TZ18} and nonconvex \cite{LR18}). 
    At first sight, this appears to set a fundamental limitation to which resources a quantum censorship can be imposed.
    In the case of a nonaffine QRT, we identify affine subspaces of free states on which a censorship can still be enforced.
    Since these subspaces are operationally motivated---while QRTs are physically motivated---, we introduce the notion of a resource-censoring (RC) map, being a generalization of RD maps.
    
    Once the censorship protocol is established, the question arises if the sending parties can use nonlocal resources, such as shared entanglement, to overcome the censorship, meaning that a resource reaches the receivers.
    This is, in principle, possible because the agent applies RC maps locally to each sender-receiver channel.
    Thus, we establish necessary and sufficient constraints on when collaborating parties can break the censorship.
    In particular, we find that a censorship that is realized via an RD map is unbreakable.
    It follows that the transmission of classical information (incoherent states) and speakable information (reference frames) can be enforced perfectly by the censorship protocol.
    By contrast, censorship of the (nonaffine) QRT of entanglement can be overcome by using preshared entanglement between multiple senders.
	Finally, the effect of noise on the protocol is discussed.

    \section{Quantum resource theories}

    When trying to establish censorship on quantum information, we first have to split the set of quantum states into resource states, whose distribution one wants to prevent, and free states, which propagate in the network unaltered.
    Making this distinction is the subject of QRTs \cite{CG19}.
    Each QRT comes with an assigned set of free states $\mathcal{F}(A)$, being a subset of the set of density operators, which we denote as $\mathcal{D}(A)$.
    The set $\mathcal{D}(A)$ contains positive semidefinite, unit-trace operators $\rho$ acting on the (here, finite-dimensional) Hilbert space $\mathcal{H}_A$ of a system $A$.
    
    A QRT is said to be affine, if the free states form an affine space; i.e., for any $\sigma_a\in \mathcal{F}(A)$, the state $\sigma=\sum_a t_a\sigma_a$, with $t_a\in\mathbb{R}$ and $\sum_a t_a=1$, is again a free state. 
    For $\mathcal{F}(A)\subseteq\mathcal{D}(A)$, its affine hull is here defined as 
    \begin{equation}
        \label{eq:affHull}
        \mathrm{Aff}(\mathcal{F}){=}\Big\{\sum_{a}t_a\sigma_a\,\Big|\,\sigma_a{\in}\mathcal{F}(A),\sum_a t_a{=}1\Big\}
        \cap\mathcal D(A).
    \end{equation}
    Since an affine combination of states $\sigma^a$ does not always yield a physical state, we made use of an intersection with the set of density operators $\mathcal{D}(A)$ in Eq. \eqref{eq:affHull} such that $\mathrm{Aff}(\mathcal{F})$ contains only physical states.%
    For example, this could be free states that admit a quasiprobability representation \cite{SW18} while resourceful states are non-decomposable;
    see Ref. \cite{PDBBSS21} for an experiment.

    Similarly, a QRT is said to be convex if the resource-free states form a convex set;
    i.e., for any $\sigma_a\in \mathcal{F}(A)$, the state $\sigma=\sum_a t_a\sigma_a$, with $t_a\geq 0$ and $\sum_a t_a=1$, is again a free state. 
    The convex hull of $\mathcal{F}(A)\subseteq\mathcal{D}(A)$ is
    \begin{equation}
        \label{eq:convHull}
        \mathrm{Conv}(\mathcal{F}){=}\Big\{\sum_{a}t_a\sigma_a\,\Big|\,\sigma_a{\in}\mathcal{F}(A),\sum_a t_a{=}1,t_a{\geq}  0\Big\}.
    \end{equation}
    Note that $\mathrm{Conv}[\mathcal{F}(A)]\subseteq \mathrm{Aff}[\mathcal{F}(A)]$ holds true because convex sums are a special case of affine sums where $0\leq t_a \leq 1$ is a probability.

    We denote by $\mathcal{D}(A_1\dots A_N)$ the set of quantum states of an $N$-partite composite system.
    The convex hull $\mathrm{Conv}[\mathcal{D}(A_1)\otimes\dots\otimes\mathcal{D}(A_N)]$ corresponds to the set of $N$-partite, fully separable states \cite{W89,H97}.
    Note that we here employ the following notation of tensor products of sets: $\mathcal D(A_1)\otimes\dots\otimes\mathcal D(A_N)=\{\rho_{A_1}\otimes\dots\otimes\rho_{A_n}|\rho_{A_1}\in\mathcal D(A_1),\ldots,\rho_{A_N}\in\mathcal D(A_N)\}$.
    Further, we suppose that the set of composite free states $\mathcal{F}(A_1\dots A_N)$ contains at least $\mathcal{F}(A_1)\otimes\dots \otimes\mathcal{F}(A_N)$ \cite{CG19}.
    This means that the independent preparation of free states by multiple parties gives a free state on the composite system.
    Moreover, discarding subsystems may not create a resource;
    i.e., for $\sigma\in\mathcal{F}(A_1\dots A_N)$, its marginals $\mathrm{Tr}_{a}(\sigma)\in \mathcal{F}(A_1\dots A_{a-1}A_{a+1}\dots A_N)$, for $a=1,\dots,N$, are free, too.
    Therefore, if $\mathcal{F}(A_1),\dots,\mathcal{F}(A_N)$ are affine, then one defines
    \begin{equation}
        \label{eq:AffFree}
        \mathcal{F}(A_1\dots A_N)
        =
        \mathrm{Aff}[\mathcal{F}(A_1)\otimes\dots\otimes\mathcal{F}(A_N)].
    \end{equation}
    If $\mathcal{F}(A_1),\dots,\mathcal{F}(A_N)$ are convex, then one has
    \begin{equation}
        \label{eq:ConvFree}
        \mathcal{F}(A_1\dots A_N)=\mathrm{Conv}(\mathcal{F}(A_1)\otimes\dots\otimes\mathcal{F}(A_N)).
    \end{equation}
    And, for a general QRT, $\mathcal{F}(A_1\dots A_N)\supseteq\mathcal{F}(A_1)\otimes\dots\otimes\mathcal{F}(A_N)$ holds true.

    \subsection{Resource-destroying maps and resource-censoring maps}

    Physical operations are mathematically expressed as quantum channels \cite{W18}, i.e., linear maps $\Lambda:\mathcal{D}(A)\to\mathcal{D}(B)$ that are completely positive and trace-preserving.
    A quantum channel $\Delta$ is said to be RD, if it additionally satisfies
    \begin{align}
        \text{(i) }
        \forall\rho\in\mathcal D(A):& \quad\Delta(\rho)\in\mathcal F(B),
        \tag{resource-destroying}
        \\
        \label{eq:NRF}
        \text{(ii) }
        \forall\sigma\in\mathcal F(A):& \quad\Delta(\sigma)=\sigma.
        \tag{freeness-preserving}
    \end{align}

    In Ref. \cite{LH17}, it was shown that the existence of a RD map implies that $\mathcal{F}(A)$ is affine; see also Refs. \cite{G17,CG19}.
    To see this, let $\sigma=\sum_a t_a\sigma_a\notin\mathcal{F}(A)$ be an affine combination, and $\sigma_a\in\mathcal{F}(A)$ are free states. 
    If an RD map $\Delta$ would exist for a nonaffine theory, then $\Delta\big(\sum_a t_a\sigma_a\big)=\sum_a t_a\Delta(\sigma_a)$ has to be a free state by condition (i).
    On the other hand, condition (ii) implies $\Delta(\sigma_a)=\sigma_a$ for all $a$, thus $\Delta(\sigma)=\sigma$.
    However, $\sigma$ is not a free state by the initial assumption.
    Thus, we are left with a contradiction, showing that such a map $\Delta$ cannot exist.
    
    Nevertheless, for a nonaffine QRT, a generalization of RD maps can be introduced, which we dub RC maps.
    \begin{definition}
        A channel $\Delta^\prime$ is said to be RC if it satisfies
        \begin{align}
            \forall\rho\in\mathcal D(A):& ~\Delta^\prime(\rho)\in\mathcal F(B),
            \tag{resource destroying}
            \\
            \label{eq:NRF}
            \forall\sigma\in\mathcal F^\prime(A):&  ~\Delta^\prime(\sigma)=\sigma,
            \tag{almost freeness-preserving}
        \end{align}
    where $\mathcal{F}^\prime(A)\subseteq\mathcal{F}(A)$ is a chosen affine subspace.
    \end{definition}
    
    We emphasize that RC maps are not just RD maps belonging to a smaller QRT $\mathcal{F}^\prime(A)$, with $\mathcal{F}^\prime(A)\subseteq\mathcal{F}(A)$, since we do not demand an RC map $\Delta^\prime$ to map any state $\rho\in\mathcal{D}(A)$ onto a state in $\mathcal{F}^\prime(B)$.
    Moreover, free states $\mathcal{F}(A)$ of a QRT are commonly motivated by physical limitations while the set $\mathcal{F}^\prime(A)$ is motivated operationally.
    Simply speaking, $\mathcal{F}^\prime(A)$ is a subset of free states for which one can guarantee that these pass the channel unaltered while any resource $\rho\notin\mathcal{F}(A)$ is destroyed. 
    The free states in $\mathcal{F}(A)\setminus\mathcal{F}^\prime(A)$ might also undergo changes.
    Since $\mathcal{F}^\prime(A_1),\ldots,\mathcal{F}^\prime(A_N)$ are, by definition, affine, $\mathcal{F}^\prime(A_1\dots A_N)$ is given by the affine hull in Eq. \eqref{eq:AffFree}.
    For an affine QRT, one has $\Delta^\prime=\Delta$, with $\mathcal{F}^\prime(A)=\mathcal{F}(A)$.
    On the other hand, for any nonaffine QRT $\mathcal{F}(A)$, one can always find at least a minimal construction $\mathcal{F}^\prime(A)=\{\sigma\}$, containing a single state $\sigma\in\mathcal{F}(A)$;
    then, $\Delta^\prime(\rho)=\mathrm{Tr}(\rho)\sigma$ is RC.

    \section{Quantum censorship}

    In the following, the quantum censorship protocol is introduced.
    Firstly, the protocol is studied for noiseless channels. 
    Secondly, we discuss under which circumstances multiple senders can coordinate their resources to potentially overcome censorship.
    After the discussion of important special cases, the effect of noise on the protocol is investigated. 

    \subsection{Censorship over noiseless channels}

    Consider $N$ senders $A_1,\dots,A_N$ who have access to local quantum resources, e.g., party $A_a$ can prepare any state $\rho_{A_a}\in \mathcal{D}(A_a)$. 
    In an unregulated network, each sender is connected to one of the receivers $B_1,\dots,B_N$ via the noiseless channel $\mathrm{id}_{A_a\to B_a}$.
    However, in order to prevent the transmission of resource states, an agent sits in between each sender-receiver pair.
    The agent's goal is to limit the type of quantum states that can be shared between parties to the free states $\mathcal{F}(A_a)$ of a QRT.
    The agent informs the senders that only the transmission of free states in an affine subspace $\mathcal{F}^\prime(A_a)\subseteq\mathcal{F}(A_a)$ is authorized, the user agreement.
    To enforce that policy, the agent can implement an RC map $\Delta^\prime$.
    Thus, the information processing protocol of (noiseless) quantum censorship is 
    \begin{equation*}
    	\Qcircuit @C=1em @R=.7em {
    		\lstick{A_1} & \qw &\gate{\Delta^{\prime}} & \qw & \qw & B_1\\
    		\lstick{\vdots} & &\vdots & & & \vdots\\
    		\lstick{} & & & & & \\
    		\lstick{A_N} & \qw &\gate{\Delta^{\prime}} & \qw & \qw & B_N.\\
    	}
    \end{equation*}
    
    As long as each sender $A_1,\dots,A_N$ only has access to local quantum resources, i.e., the composite system is in a product state $\rho_{A_1}\otimes\dots\otimes\rho_{A_N}$, receiving parties $B_1,\dots,B_N$ obtain $\Delta^\prime(\rho_{A_1})\otimes\dots\otimes \Delta^\prime(\rho_{A_N})$, which is a free state in $\mathcal{F}(B_1\dots B_N)$, as intended by the agent.
    If, however, an initial state $\sigma_{A_1}\otimes\dots\otimes \sigma_{A_N}$ belongs to $\mathcal{F}^\prime(A_1)\otimes \dots\otimes \mathcal{F}^\prime(A_N)$, it remains unchanged by the action of $(\Delta^\prime)^{\otimes N}$.
    This allows the users of the network to carry out (undisturbed) communication only with messages $\sigma\in\mathcal{F}^\prime(A_1\ldots A_N)$.
    
    \subsection{Breakable and unbreakable censorship}

    Clearly, a single sender cannot break censorship as $\rho\in\mathcal{D}(A)$ is mapped onto a free state $\Delta^\prime(\rho)\in\mathcal{F}(B)$.
    But sending parties $A_1,\dots,A_N$ might coordinate their actions to prepare a nonlocal resource state $\rho\in\mathcal{D}(A_1\dots A_N)$.
    In this case, the circuit is
	\begin{equation*}
		\Qcircuit @C=1em @R=.7em {
			\lstick{A_1} & \ctrl{1} &\gate{\Delta^\prime} & \qw & \qw & B_1\\
			\lstick{\vdots} &  &\vdots & & & \vdots\\
			\lstick{} &  & & & & \\
			\lstick{A_N} & \ctrl{-1} &\gate{\Delta^{\prime}} & \qw & \qw & B_N,\\
		}
	\end{equation*}
	where the vertical line between the senders indicates preshared entanglement (and randomness \cite{CG19}) necessary to prepare an arbitrary $N$-partite quantum state.
    Because of the local action of the agent's operation $\Delta^{\prime}$, the question arises if the censorship can be overcome in this manner?
    Formally, we define the notion of breakable censorship as follows.
    \begin{definition}
        A censorship is breakable if there exists a state $\rho\notin\mathcal{F}(A_1\dots A_N)$ such that $(\Delta^{\prime})^{\otimes N}(\rho)=\rho$ .
        Otherwise, censorship is said to be unbreakable.
    \end{definition}
    Simply speaking, censorship is breakable if a quantum correlated resource state reaches the receivers unaltered.
    The receivers can coordinate their actions to make use of the resource.
    When censorship is unbreakable, malicious users $A_1,\dots,A_N$ cannot proliferate quantum resources, thus making it easier to attribute the origin of a cryptographic attack in places, wherever post-quantum cryptography is not at its state-of-the-art.
    In a commercial setting, where a provider demands premium fees for sharing quantum resources, overcoming the censorship creates a free-rider problem, in which users can transmit quantum information without paying. 
    This, in turn, destroys a provider's incentive to participate in the build-up of a global quantum internet.
    
    The following theorem establishes for which QRTs the censorship can be overcome.
    \begin{theorem}
        \label{th:censor}
        A censorship is breakable, if and only if $\mathcal{F}^\prime(A_1\dots A_N)\setminus \mathcal{F}(A_1\dots A_N)$ is nonempty.
    \end{theorem}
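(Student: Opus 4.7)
The argument splits naturally into the two implications of the iff. Both hinge on the affine-hull description $\mathcal{F}^\prime(A_1\dots A_N)=\mathrm{Aff}[\mathcal{F}^\prime(A_1)\otimes\cdots\otimes\mathcal{F}^\prime(A_N)]$ stated in Eq.~\eqref{eq:AffFree} and on the pointwise identity $\Delta^\prime(\sigma)=\sigma$ for every $\sigma\in\mathcal{F}^\prime(A_a)$.

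For the ($\Leftarrow$) direction I would pick any $\rho\in\mathcal{F}^\prime(A_1\dots A_N)\setminus\mathcal{F}(A_1\dots A_N)$ and expand it as $\rho=\sum_k t_k\,\sigma_{k,1}\otimes\cdots\otimes\sigma_{k,N}$ with $\sigma_{k,a}\in\mathcal{F}^\prime(A_a)$ and $\sum_k t_k=1$. Multilinearity of the tensor product together with $\Delta^\prime(\sigma_{k,a})=\sigma_{k,a}$ immediately gives $(\Delta^\prime)^{\otimes N}(\rho)=\rho$, so $\rho\notin\mathcal{F}(A_1\dots A_N)$ furnishes exactly the fixed resource state required by the definition of breakability. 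No extra structure of $\Delta^\prime$ is needed for this half.

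For the ($\Rightarrow$) direction I would argue by contrapositive: assume $\mathcal{F}^\prime(A_1\dots A_N)\subseteq\mathcal{F}(A_1\dots A_N)$, take an arbitrary density-operator fixed point $\rho$ of $(\Delta^\prime)^{\otimes N}$, and target $\rho\in\mathcal{F}(A_1\dots A_N)$. Expanding $\rho=\sum_k c_k\,\rho_{k,1}\otimes\cdots\otimes\rho_{k,N}$ in a product basis of local density operators (real $c_k$ with $\sum_k c_k=1$ enforced by trace-preservation) and substituting into the fixed-point equation produces $\rho=\sum_k c_k\,\Delta^\prime(\rho_{k,1})\otimes\cdots\otimes\Delta^\prime(\rho_{k,N})$. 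If each factor $\Delta^\prime(\rho_{k,a})$ can be placed inside $\mathcal{F}^\prime(A_a)$, then $\rho\in\mathrm{Aff}[\mathcal{F}^\prime(A_1)\otimes\cdots\otimes\mathcal{F}^\prime(A_N)]\cap\mathcal{D}(A_1\dots A_N)=\mathcal{F}^\prime(A_1\dots A_N)\subseteq\mathcal{F}(A_1\dots A_N)$, and the contrapositive closes.

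The main obstacle is precisely that local refinement. A priori only $\Delta^\prime(\mathcal{D}(A_a))\subseteq\mathcal{F}(A_a)$ is guaranteed, and peripheral eigenvectors of $\Delta^\prime$ (eigenvalues of modulus $1$ different from $1$) could combine across tensor factors to yield $(\Delta^\prime)^{\otimes N}$ extra fixed points not of the required product form. I would resolve this by invoking the canonical idempotent RC map featured in the preceding subsection — the trace-and-prepare construction $\Delta^\prime(\rho)=\mathrm{Tr}(\rho)\sigma$ being the extreme case — whose density-operator fixed set is exactly $\mathcal{F}^\prime(A_a)$ and whose only peripheral eigenvalue is $1$; for such a map $\mathrm{Fix}((\Delta^\prime)^{\otimes N})=\mathrm{Fix}(\Delta^\prime)^{\otimes N}$ factorizes cleanly, giving the desired local confinement. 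A non-idempotent RC map can be reduced to this setting by passing to its long-time Ces\`aro average $\frac{1}{n}\sum_{k=0}^{n-1}(\Delta^\prime)^k$, which shares the density-operator fixed set of $\Delta^\prime$ and restores the factorization.
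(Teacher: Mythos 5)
Your $(\Leftarrow)$ direction is exactly the paper's argument: expand $\rho\in\mathcal{F}^\prime(A_1\dots A_N)$ according to Eq.~\eqref{eq:AffFree} into an affine combination of products of states in $\mathcal{F}^\prime(A_a)$, use linearity and $\Delta^\prime(\sigma_{k,a})=\sigma_{k,a}$, and conclude $(\Delta^\prime)^{\otimes N}(\rho)=\rho$. That half is complete and matches the paper.

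The $(\Rightarrow)$ direction is where you have correctly isolated the crux --- one must show that every density-operator fixed point of $(\Delta^\prime)^{\otimes N}$ lies in $\mathcal{F}^\prime(A_1\dots A_N)$ --- but your repair does not close the gap. Passing to the Ces\`aro average $\bar\Delta$ preserves $\mathrm{Fix}(\Delta^\prime)$ for a single copy but \emph{not} $\mathrm{Fix}((\Delta^\prime)^{\otimes N})$: the cross-peripheral fixed points you worry about are precisely the ones the averaging annihilates, so unbreakability of the averaged protocol says nothing about the protocol actually run with $\Delta^\prime$; likewise you cannot substitute a ``canonical idempotent'' map, since the theorem quantifies over the given $\Delta^\prime$. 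The obstruction is genuine rather than cosmetic. Take the entanglement QRT with $\mathcal{F}^\prime(A)$ the states diagonal in the product basis of $\mathcal{X}_A\otimes\mathcal{Y}_A$ and $\Delta^\prime=\Delta_{\mathcal{X}}\otimes\mathcal{Z}_{\mathcal{Y}}$, where $\Delta$ is the dephasing channel of Eq.~\eqref{eq:RDcoh} and $\mathcal{Z}(\tau)=Z\tau Z$. This satisfies both RC conditions, and $\mathcal{F}^\prime(A_1A_2)\subseteq\mathcal{F}(A_1A_2)$ because diagonal four-partite states are fully separable; yet $\ket{0}\bra{0}_{\mathcal{X}_{A_1}}\otimes\ket{0}\bra{0}_{\mathcal{X}_{A_2}}\otimes\ket{\phi^+}\bra{\phi^+}_{\mathcal{Y}_{A_1}\mathcal{Y}_{A_2}}$ is fixed by $(\Delta^\prime)^{\otimes 2}$, since $Z\otimes Z\ket{\phi^+}=\ket{\phi^+}$, and it is entangled across $A_1{:}A_2$. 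So the ``only if'' implication cannot be derived from the RC axioms alone; it needs an extra hypothesis such as $\Delta^\prime$ being idempotent with density-operator fixed set exactly $\mathcal{F}^\prime(A)$ (the latter is a second leak you should flag: the definition allows $\Delta^\prime$ to fix free states outside $\mathcal{F}^\prime(A)$, and those also generate unaccounted fixed points of $(\Delta^\prime)^{\otimes N}$). Under those added hypotheses your factorization of the fixed-point space does finish the converse. For comparison, the paper's own proof of this direction is the single phrase ``by the above argument,'' which only re-establishes the inclusion from the first half; your analysis of where the difficulty sits is sharper than the paper's, but the fix you propose is not valid as stated.
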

    \begin{proof}
        If $\mathcal{F}^\prime(A_1\dots A_N)\setminus \mathcal{F}(A_1\dots A_N)$ is nonempty,
        then there exists a resource state $\rho\notin\mathcal{F}(A_1\dots A_N)$ that is stabilized by $(\Delta^\prime)^{\otimes N}$, i.e., $(\Delta^\prime)^{\otimes N}(\rho)=\rho$.
        This follows from the linearity of $\Delta^\prime$ and the definition of the affine hull $\mathcal{F}^\prime(A_1\dots A_N)$;
        see Eq. \eqref{eq:AffFree}.
        Hence, $(\Delta^{\prime})^{\otimes N}(\rho)=\rho\notin\mathcal{F}(B_1\dots B_N)$ and censorship is breakable.
        Conversely, if censorship is breakable, then there exists a state $\rho\notin\mathcal{F}(A_1\dots A_N)$ such that $(\Delta^\prime)^{\otimes N}(\rho)=\rho$. By the above argument, $\rho$ lies in $\mathcal{F}^\prime(A_1\dots A_N)\setminus \mathcal{F}(A_1\dots A_N)$, completing the proof.
    \end{proof}

    Intuitively, this breaking of the censorship can be understood as follows. 
    The subset $\mathcal{F}^\prime(A)$ of the free states $\mathcal{F}(A)$ was motivated operationally as a space on which one could establish a censorship for a single sender-receiver pair. 
    However, its affine hull $\mathcal{F}^\prime(A_1\dots A_N)$ as defined in Eq. \eqref{eq:AffFree} might contain states that are resourceful on the composite system.

    \subsection{Special cases}

    For the case $\Delta^\prime=\Delta$, being the (unique \cite{G17}) RD map of a QRT, we have the following theorem. 
    \begin{theorem}
        \label{co:RD}
        Let $\Delta$ be the RD map of a QRT.
        Then, the censorship is unbreakable.
    \end{theorem}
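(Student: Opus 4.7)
The plan is to reduce the statement to Theorem \ref{th:censor} and verify that the breakability criterion fails whenever the censoring map is a genuine RD map. By Theorem \ref{th:censor}, unbreakability is equivalent to the inclusion $\mathcal{F}^\prime(A_1\dots A_N)\subseteq \mathcal{F}(A_1\dots A_N)$, so my task reduces to establishing this inclusion under the hypothesis $\Delta^\prime=\Delta$.

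First, I would record the consequences of $\Delta^\prime=\Delta$ at the single-party level. Since an RD map preserves all free states (condition (ii) in its definition), the operationally chosen subspace coincides with the full free set, $\mathcal{F}^\prime(A_a)=\mathcal{F}(A_a)$ for every sender $a=1,\dots,N$. Moreover, the paper has already recalled (following Ref.\ \cite{LH17}) that the existence of an RD map forces $\mathcal{F}(A_a)$ to be affine, which is precisely the regime in which Eq.\ \eqref{eq:AffFree} applies.

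Next, I would compare the two multipartite constructions. On the one hand, the definition of an RC map instructs us to build $\mathcal{F}^\prime(A_1\dots A_N)$ as $\mathrm{Aff}[\mathcal{F}^\prime(A_1)\otimes\dots\otimes\mathcal{F}^\prime(A_N)]$. On the other hand, affineness of each $\mathcal{F}(A_a)$ makes Eq.\ \eqref{eq:AffFree} applicable, giving $\mathcal{F}(A_1\dots A_N)=\mathrm{Aff}[\mathcal{F}(A_1)\otimes\dots\otimes\mathcal{F}(A_N)]$. Substituting $\mathcal{F}^\prime(A_a)=\mathcal{F}(A_a)$ in the first expression identifies it with the second, so $\mathcal{F}^\prime(A_1\dots A_N)=\mathcal{F}(A_1\dots A_N)$. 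In particular, $\mathcal{F}^\prime(A_1\dots A_N)\setminus\mathcal{F}(A_1\dots A_N)=\emptyset$, and Theorem \ref{th:censor} immediately yields unbreakability.

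The only place where one might expect friction is in making sure that the composite-level definitions really do collapse as claimed; one needs the agreement of the two affine hulls to be an equality of sets rather than merely an inclusion. Because both hulls are defined by the same recipe applied to the same tensor-product generator set, this is automatic once the single-party identity $\mathcal{F}^\prime(A_a)=\mathcal{F}(A_a)$ is in hand. Hence the proof should consist of one invocation of the RD definition, one invocation of the affineness consequence of Ref.\ \cite{LH17}, and a one-line application of Theorem \ref{th:censor}.
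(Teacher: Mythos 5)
Your proposal is correct and follows essentially the same route as the paper: invoke the affineness of $\mathcal{F}(A)$ forced by the existence of an RD map, conclude $\mathcal{F}^\prime(A_1\dots A_N)=\mathcal{F}(A_1\dots A_N)$ since both are the affine hull of the same generator set, and apply Theorem \ref{th:censor}. Your version merely spells out the intermediate identification $\mathcal{F}^\prime(A_a)=\mathcal{F}(A_a)$ more explicitly than the paper does.
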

    \begin{proof}
        Since $\Delta$ is RD, the set of free states $\mathcal{F}(A)$ is affine.
        Hence, $\mathcal{F}(A_1\dots A_N)=\mathcal{F}^\prime(A_1\dots A_N)$, and by virtue of Theorem \ref{th:censor}, the censorship is unbreakable. 
    \end{proof}

    In principle, the censorship protocol can be made unbreakable for any QRT.
    This can be achieved by choosing $\mathcal{F}^\prime(A)=\{\sigma\}$ as a single-state edge case.
    The RC map of the theory is the replacement channel $\Delta^\prime(\rho)=\mathrm{Tr}(\rho)\sigma$.
    Then, $\mathcal{F}^\prime(A_1\dots A_N)=\mathcal{F}^\prime(A_1)\otimes\dots\otimes\mathcal{F}^\prime(A_N)$, which is always contained in $\mathcal{F}(A_1\dots A_N)$. 
    It follows from Theorem \ref{th:censor} that censorship is unbreakable.

    Next, suppose we are concerned with the censorship of a convex QRT.
    The set $\mathcal{F}(A_1\dots A_N)$ as defined by the convex hull in Eq. \eqref{eq:ConvFree} contains free, $N$-partite separable states.
    Thus, entanglement-breaking channels play a distinct role in the censorship of these resources.
    The channel $\Delta^\prime$ is entanglement breaking \cite{HS03} if $\mathrm{id}_{A_1\to B_1}\otimes\Delta^\prime$ maps any bipartite state $\rho$ onto a separable state; i.e., $(\mathrm{id}_{A_1\to B_1}\otimes\Delta^\prime)(\rho)$ is an element of $\mathrm{Conv}(\mathcal{D}(B_1)\otimes\mathcal{D}(B_2))$. 
    As $\Delta^\prime$ is additionally RC, $\mathrm{id}_{A_1\to B_1}\otimes\Delta^\prime$ is a projection onto $\mathrm{Conv}(\mathcal{D}(B_1)\otimes\mathcal{F}^\prime(B_2))$.
    For an entanglement-breaking RC map, the following theorem holds true.

    \begin{theorem}
        \label{th:EB}
        If $\Delta^\prime$ is an entanglement-breaking RC map of a convex QRT,
        then censorship is unbreakable.
    \end{theorem}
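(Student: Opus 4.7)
The plan is to reduce Theorem~\ref{th:EB} to Theorem~\ref{th:censor} by showing that every fixed point $\rho$ of $(\Delta^\prime)^{\otimes N}$ already lies in $\mathcal{F}(A_1\dots A_N)$, i.e.\ that $\mathcal{F}^\prime(A_1\dots A_N)\setminus\mathcal{F}(A_1\dots A_N)$ is empty. The workhorse is the Holevo ``measure-and-prepare'' characterization of entanglement-breaking channels: because $\Delta^\prime$ is EB, there is a POVM $\{F_k\}$ on $A$ and states $\omega_k\in\mathcal{D}(B)$ with $\Delta^\prime(\eta)=\sum_k\mathrm{Tr}(F_k\eta)\,\omega_k$. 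Tensoring and linearity immediately give
\begin{equation*}
    (\Delta^\prime)^{\otimes N}(\rho)
    =\sum_{k_1,\dots,k_N}\mathrm{Tr}\bigl((F_{k_1}\otimes\dots\otimes F_{k_N})\rho\bigr)\,\omega_{k_1}\otimes\dots\otimes\omega_{k_N},
\end{equation*}
so the output is automatically a convex combination of product states.

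The main obstacle is precisely that the preparation states $\omega_k$ need \emph{not} lie in $\mathcal{F}(B)$, so this one-shot decomposition does not visibly sit inside $\mathrm{Conv}(\mathcal{F}(B_1)\otimes\dots\otimes\mathcal{F}(B_N))$. I would resolve this by exploiting the RC hypothesis applied to the preparations themselves: $\Delta^\prime(\omega_k)\in\mathcal{F}(B)$ for every $k$. Consequently $\Delta^\prime\circ\Delta^\prime(\eta)=\sum_k\mathrm{Tr}(F_k\eta)\,\Delta^\prime(\omega_k)$ is again a measure-and-prepare channel, but now with \emph{free} preparations. Using the elementary identity $\bigl((\Delta^\prime)^{\otimes N}\bigr)^2=(\Delta^\prime\circ\Delta^\prime)^{\otimes N}$, one obtains
\begin{equation*}
    \bigl((\Delta^\prime)^{\otimes N}\bigr)^2(\rho)
    =\sum_{k_1,\dots,k_N}p_{k_1\dots k_N}\,\Delta^\prime(\omega_{k_1})\otimes\dots\otimes\Delta^\prime(\omega_{k_N}),
\end{equation*}
with nonnegative coefficients $p_{k_1\dots k_N}=\mathrm{Tr}\bigl((F_{k_1}\otimes\dots\otimes F_{k_N})\rho\bigr)$ summing to one; by the convexity hypothesis on the QRT this state belongs to $\mathrm{Conv}\bigl(\mathcal{F}(B_1)\otimes\dots\otimes\mathcal{F}(B_N)\bigr)=\mathcal{F}(B_1\dots B_N)$ via Eq.~\eqref{eq:ConvFree}.

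The last step is to feed the fixed-point condition into this identity. If $\rho$ is a fixed point of $(\Delta^\prime)^{\otimes N}$, then iterating once more yields $\rho=\bigl((\Delta^\prime)^{\otimes N}\bigr)^2(\rho)\in\mathcal{F}(A_1\dots A_N)$, contradicting $\rho\notin\mathcal{F}(A_1\dots A_N)$. Theorem~\ref{th:censor} then delivers unbreakability. The step I expect to flag as the crux is the one described above: the naive Holevo expansion is insufficient, and the fix consists in \emph{composing} $\Delta^\prime$ with itself so that the RC condition can act on the preparation states, a maneuver legitimized precisely by the fixed-point hypothesis.
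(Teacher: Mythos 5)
Your proof is correct, and it reaches the conclusion by a genuinely different (and more explicit) route than the paper. The paper's proof is a two-line set inclusion: it asserts that, because $\Delta^\prime$ is entanglement breaking and RC, $(\Delta^\prime)^{\otimes N}$ maps all of $\mathcal{D}(A_1\dots A_N)$ into $\mathrm{Conv}\big(\mathcal{F}^\prime(B_1)\otimes\dots\otimes\mathcal{F}^\prime(B_N)\big)$; since $(\Delta^\prime)^{\otimes N}$ stabilizes $\mathcal{F}^\prime(A_1\dots A_N)$, that affine hull is contained in this convex hull, which in turn sits inside $\mathcal{F}(A_1\dots A_N)$ by Eq.~\eqref{eq:ConvFree}, and Theorem~\ref{th:censor} finishes the job. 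Your argument replaces the paper's unargued claim about the image of $(\Delta^\prime)^{\otimes N}$ with the Holevo measure-and-prepare decomposition, and your self-composition maneuver $\big((\Delta^\prime)^{\otimes N}\big)^2=(\Delta^\prime\circ\Delta^\prime)^{\otimes N}$ is exactly what is needed to handle the fact that the Holevo preparations $\omega_k$ need not be free: after one more application of $\Delta^\prime$ they land in $\mathcal{F}(B)$, and the fixed-point hypothesis lets you iterate for free. This is arguably a service to the paper, whose own assertion that the image lies in the convex hull of $\mathcal{F}^\prime$-products (rather than merely of separable products) is in tension with its earlier remark that an RC map need not send states into $\mathcal{F}^\prime(B)$; your weaker conclusion---image of the squared map inside $\mathrm{Conv}\big(\mathcal{F}(B_1)\otimes\dots\otimes\mathcal{F}(B_N)\big)$---is all that the theorem requires. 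Two cosmetic points: the composition $\Delta^\prime\circ\Delta^\prime$ implicitly identifies the input and output spaces, but this identification is already forced by the fixed-point condition in the definition of breakability, so nothing is lost; and once you know every fixed point of $(\Delta^\prime)^{\otimes N}$ lies in $\mathcal{F}(A_1\dots A_N)$, unbreakability follows directly from the definition, so the final appeal to Theorem~\ref{th:censor} is optional.
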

    \begin{proof}
        Let $\Delta^\prime$ be entanglement breaking. 
        Then, $(\Delta^\prime)^{\otimes N}$ is a mapping from $\mathcal{D}(A_1\dots A_N)$ to the set $\mathrm{Conv}(\mathcal{F}^\prime(B_1)\otimes\dots\otimes \mathcal{F}^\prime(B_N))$.
        But since $(\Delta^\prime)^{\otimes N}$ stabilizes states in the affine hull $\mathcal{F}^\prime(A_1\dots A_N)$ given by Eq. \eqref{eq:AffFree}, it follows that 
        \begin{equation}
        \begin{aligned}
            \mathcal{F}^\prime(A_1\dots A_N)
            &\subseteq \mathrm{Conv}(\mathcal{F}^\prime(A_1)\otimes\dots\otimes \mathcal{F}^\prime(A_N))
            \\
            &\subseteq\mathcal{F}(A_1\dots A_N),
        \end{aligned}
        \end{equation}
        where the second line follows from $\mathcal{F}^\prime(A)\subseteq\mathcal{F}(A)$, and because $\mathcal{F}(A_1\dots A_N)$ is defined via the convex hull in Eq. \eqref{eq:ConvFree}.
        Hence, $\mathcal{F}^\prime(A_1\dots A_N)\setminus \mathcal{F}(A_1\dots A_N)$ is empty and Theorem \ref{th:censor} implies unbreakable censorship.
    \end{proof}

    For the purpose of illustration, we can make use of the minimal construction $\mathcal{F}^\prime(A)=\{\sigma\}$ in which $\sigma$ is a single free state belonging to a convex QRT $\mathcal{F}(A)$.
    An RC map is given by the replacement channel $\Delta^\prime(\rho)=\mathrm{Tr}(\rho)\sigma$.
    Since $\Delta^\prime$ is entanglement breaking, Theorem \ref{th:EB} ensures that the censorship is unbreakable, which might also be obvious from the form of the channel $\Delta^\prime$.
    
    \subsection{Censorship via noisy channels}

    So far, we have restricted ourselves to perfect communication;
    that is, each sender is connected to a receiver via an identity channel.
    In realistic communication scenarios, however, we expect information transmission to be performed over a noisy channel $\Phi:\mathcal{D}(A_a)\to\mathcal{D}(A_a)$.
    While this is a well-known issue for any information processing task, in the context of the censorship protocol, we ought to be worried that the RC map $\Delta^\prime$ introduces additional errors.
    Thus, we consider the noisy process $\Phi$ to occur before the operation $\Delta^\prime$.
    The protocol for the noisy case reads
    \begin{equation*}
    	\Qcircuit @C=1em @R=.7em {
    		\lstick{A_1} & \gate{\Phi} &\gate{\Delta^\prime} & \qw & \qw & B_1\\
    		\lstick{\vdots} &\vdots &\vdots & & & \vdots\\
    		\lstick{} & & & & & \\
    		\lstick{A_N} & \gate{\Phi} &\gate{\Delta^\prime} & \qw & \qw & B_N.\\
    	}
    \end{equation*}

	Throughout, it is assumed that $\Phi$ is resource non-generating \cite{CG19,LH17};
    that is, for any free state $\sigma\in\mathcal{F}(A_a)$, one has $\Phi(\sigma)\in\mathcal{F}(A_a)$.
	This seems to be a reasonable assumption because we rarely expect a noisy map $\Phi$ to create a resource from a free state.
    If censorship is established via a RD map $\Delta^\prime=\Delta$, then $\Delta(\sigma)=\sigma$ for any $\sigma\in\mathcal{F}(A)$.
	This implies that the noise $\Phi$ commutes with $\Delta$ on the set of free states, i.e.,
    \begin{equation}
        \forall \sigma\in\mathcal{F}(A): (\Delta\circ\Phi)(\sigma)=(\Phi\circ\Delta)(\sigma).
    \end{equation}
	In this scenario, the censorship protocol does not introduce additional errors through the RD map $\Delta$.
    This means that if a sender transmits only free states---as the agent wants them to do---, their message $\sigma$ is obtained by the receiver as $\Phi(\sigma)$.
    Of course, noiseless communication is infeasible in real-world settings, but the agent (e.g., a network provider) can aim at high-fidelity communication, avoiding the introduction of additional noise by enforcing censorship via $\Delta$.
 
    The situation is more delicate if we consider censorship using a RC map $\Delta^\prime$.
    The RC map stabilizes only an affine subspace $\mathcal{F}^\prime(A)\subseteq\mathcal{F}(A)$.
    States in $\mathcal{F}(A)\setminus \mathcal{F}^\prime(A)$ could be altered by $\Delta^\prime$. 
    Then, any resource-non-generating $\Phi$ that takes elements in $\mathcal{F}^\prime(A)$ to elements in $\mathcal{F}(A)\setminus \mathcal{F}^\prime(A)$ does not generate a resource, but it might lead $\Delta^\prime$ to alter these states.
    The protocol then introduces additional changes to the state that distort the sender-receiver experience in addition to the already present noise generated by $\Phi$.
    Thus, in order to ensure that customers can exchange free states in $\mathcal{F}^\prime(A)$ without interference caused by the RC map $\Delta^\prime$, the provider must, in general, keep the sender-receiver channels free of any noise process that is not an automorphism $\Phi^\prime:\mathcal{F}^\prime(A)\to \mathcal{F}^\prime(A)$.

    On the other hand, due to $\Delta^\prime$ being a projection onto $\mathcal{F}^\prime$, there might be practical situations in which the action of $\Delta^\prime$ has a correcting effect, i.e., $\Delta^\prime(\Phi(\sigma))$ is closer (with respect to some metric) to $\sigma$ that the noisy message $\Phi(\sigma)$.

    \section{Censorship of specific resources}

    In the following, the censorship protocol is illustrated for several resources including coherence, reference frames, and entanglement.

    \subsection{Censorship of coherence}

    In the QRT of coherence \cite{BC14,SV15}, one quantifies the amount of superpositions in a general mixed state with respect to a fixed orthonormal basis $\{\ket{x}\}_x$, the incoherent basis. 
    Free (likewise, incoherent) states admit a diagonal representation in that basis, $\sigma=\sum_x p_x\ket{x}\bra{x}$.
    This QRT is affine since, by Eqs. \eqref{eq:affHull} and \eqref{eq:convHull}, the definitions of convex and affine hull coincide in the example under study.
    An RD map is given by the completely dephasing channel \cite{LH17,CG19}
    \begin{equation}
        \label{eq:RDcoh}
        \Delta(\rho)=\sum_{x}\ket{x}\bra{x}\rho \ket{x}\bra{x}.
    \end{equation}
    Imposing censorship on coherence using $\Delta$ means that only incoherent states (classical information) are preserved during the communication. 
    Since $\Delta$ is RD, the censorship is unbreakable, Theorem \ref{co:RD}.
    This is a positive result for any provider (agent) trying to reserve quantum communication for specific costumers, while restricting the general users of the network to classical communication only.
    Senders have to accept such policies as there is no way of breaking the censorship.

    A physical realization of the censorship can be implemented by linear optics. 
    The sender $A$ prepares the coherent superposition $\ket{\psi}=\alpha_H\ket{H}+\alpha_V\ket{V}$ in their lab.
    Here, horizontal and vertical polarization $\ket{H}$ and $\ket{V}$ define the incoherent basis, with $|\alpha_H|^2+|\alpha_V|^2=1$.
    To prevent the transmission of coherent quantum information to $B$, the agent simply applies a polarization filter to the state $\ket{\psi}$.
    This realizes a projective measurement of $\ket{H}\bra{H}$ or $\ket{V}\bra{V}$, depending on the filter.
    Since the agent conceals which measurement was performed, $B$'s best description is given by the incoherent state $\sigma=|\alpha_H|^2\ket{H}\bra{H}+|\alpha_V|^2\ket{V}\bra{V}$.

    \subsection{Censorship of reference frames}

    Certain types of quantum information are, without a common reference frame, of no use to the communicating parties in a network.
    For instance, in the QRT of coherence, one can only decide if a given state is free or not if the incoherent basis $\{\ket{x}\}_x$ is known.
    Mathematically, we describe a change of the reference frame by a unitary operator $U_a$, relating a sender's state $\rho$ to a receiver's state via $U_a\rho U_a^\dagger$.
    However, if $U_a$ is unknown, the description of the state is obtained by averaging over all possible values in a group $\mathcal{G}=\{U_a\}_a$, i.e., $\Delta(\rho)=\tfrac{1}{\vert \mathcal{G}\vert}\sum_{a=1}^{\vert \mathcal{G}\vert} U_{a}\rho U_{a}^\dagger$.
    The channel $\Delta$ is also known as the $\mathcal{G}$-twirling map \cite{CG19}.
    If we consider a lack of a shared reference frame to define the free states $\mathcal{F}(A)=\big\{\Delta(\rho)\,\big|\,\rho\in\mathcal{D}(A)\big\}$, then $\Delta$ is a projection onto $\mathcal{F}(A)$.
    In particular, due to $\mathcal{F}(A)$ being affine, it is the RD map of the QRT.
    Thus, the censorship of reference frames in unbreakable; see Theorem \ref{co:RD}.
    Note that the same conclusion cannot be reached using Theorem \ref{th:EB}. 
    Even though $\mathcal{F}(A)$ is affine, and thus convex, $\Delta$ is generally not entanglement breaking.

    \subsection{Censorship of entanglement}

    In the QRT of entanglement \cite{HH09,CG19}, the set of free states $\mathcal{F}(A)$ contains all separable bipartite states of the system $A$, i.e., $\mathcal{H}_A=\mathcal{X}_{A}\otimes \mathcal{Y}_{A}$. 
    A quantum state $\sigma\in\mathcal{D}(A)$ is said to be separable if it can be written as a probabilistic mixture of pure product states \cite{W89} 
    \begin{equation}
        \sigma=\sum_x p_x \ket{\psi^x}\bra{\psi^x}_{\mathcal{X}_A}\otimes \ket{\phi^x}\bra{\phi^x}_{\mathcal{Y}_A},
    \end{equation}
    where $p_x\geq 0$ and $\sum_xp_x=1$.
    Mathematically, $\mathcal{F}(A)$ is the convex hull of $\mathcal{D}(\mathcal{X}_A)\otimes \mathcal{D}(\mathcal{Y}_A)$.
    Since $\mathcal{F}(A)$ is convex, but not affine, there is no RD map for the theory.
    However, if the agent informs a sender-receiver pair to agree on a fixed orthonormal basis $\{\ket{x}\}_x$ for their subsystems $\mathcal{X}_A$ and $\mathcal X_B$, a censorship between them can be established.
    This is done on the affine subspace $\mathcal{F}^\prime(A)$, containing classical-quantum states $\sigma=\sum_{x}p_x\ket{x}\bra{x}_{\mathcal{X}_A}\otimes \sigma^x_{\mathcal{Y}_A}$ \cite{D10}, which are diagonal with respect to $\{\ket{x}\}_x$ in $\mathcal{X}_A$, and we have arbitrary $\sigma_{\mathcal{Y}_A}^x$.
    To see that $\mathcal{F}^\prime(A)$ is indeed affine, consider the affine combination $\sigma=\sum_a t_a\sigma^a$ of free states $\sigma^a=\sum_{x}p_{x,a}\ket{x}\bra{x}_{\mathcal{X}_A}\otimes \sigma_{\mathcal{Y}_A}^{x,a}$, viz.
    \begin{equation}
            \sigma=\sum_{x} q_x\ket{x}\bra{x}_{\mathcal{X}_A} \otimes \omega^x_{\mathcal{Y}_A}\in\mathcal{F}^\prime(A),
    \end{equation}
    where $\omega^x_{\mathcal{Y}_A}{=}\sum_a t_a p_{x,a}\sigma^{x,a}_{\mathcal{Y}_A}/\sum_a t_a p_{x,a}$ and $\sum_{a,x} t_a p_{x,a} =\sum_{x}q_x=1$. 
    An RC map for entanglement is defined as
    \begin{equation}
        \label{eq:RDent}
        \Delta^\prime(\rho)=\sum_x\big(\ket{x}\bra{x}_{\mathcal{X}_{A}}\otimes\mathbb{1}_{\mathcal{Y}_{A}}\big)\rho \big(\ket{x}\bra{x}_{\mathcal{X}_{A}}\otimes\mathbb{1}_{\mathcal{Y}_{A}}\big).
    \end{equation}
    Note that $\Delta'$ acts trivially on the subsystems $\mathcal{Y}_A$, i.e., $\Delta^\prime=\Delta\otimes\mathrm{id}_{\mathcal{Y}_A\to\mathcal Y_B}$, with $\Delta$ being the dephasing channel, Eq. \eqref{eq:RDcoh}.
    The channel in Eq. \eqref{eq:RDent} can be used to impose censorship on entanglement.
    To see this, note that $\Delta^\prime$ takes any bipartite state $\rho\in\mathcal{D}(A)$ to a separable state in $\mathcal{F}(B)$. 
    However, not all free states $\sigma\in\mathcal{F}(A)$ are stabilized by the map in Eq. \eqref{eq:RDent}, but only those in $\mathcal{F}^\prime(A)$.

    In the QRT of entanglement, $\mathcal{F}(A_1\dots A_N)$ is given by the convex hull in Eq. \eqref{eq:ConvFree}.
    In contrast, the affine hull $\mathcal{F}^\prime(A_1\dots A_N)$, as defined in Eq. \eqref{eq:AffFree}, contains states possessing entanglement between senders $A_1,\dots, A_N$.
    It follows that $\mathcal{F}^\prime(A_1\dots A_N)\setminus \mathcal{F}(A_1\dots A_N)$ is nonempty, and Theorem \ref{th:censor} thus implies that censorship can be broken.
    To see this explicitly, consider two senders $A_1$ and $A_2$ sharing the state
    \begin{equation}
        \rho_{A_1A_2}=\sum_{x,y}p_{xy}\ket{x}\bra{x}_{\mathcal{X}_{A_1}}\otimes \ket{y}\bra{y}_{\mathcal{X}_{A_2}}\otimes \rho^{xy}_{\mathcal{Y}_{A_1}\mathcal{Y}_{A_2}},
    \end{equation}
    where at least one $\rho^{xy}$ is entangled.
    The protocol for this case is
    \begin{equation*}
    	\Qcircuit @C=1em @R=.7em {
    		\lstick{} & \qw &\gate{\Delta} & \qw & \qw\\
    		\lstick{}& \ctrl{2} &\qw & \qw & \qw 
    		\inputgroup{1}{2}{.85em}{A_1}\\
    		\lstick{} & \qw & \gate{\Delta} & \qw & \qw \\
    		\lstick{} & \ctrl{-2} &\qw & \qw & \qw & . \inputgroup{3}{4}{0.9em}{A_2}\\
    	}
    \end{equation*}
    It is not hard to see that $\Delta^\prime_{A_1} \otimes \Delta^\prime_{A_2}$ leaves the states $\rho^{xy}$ unaltered.
    Thus, entanglement is passed on to the receivers, and censorship has been broken.
    Note that the agent might enforce an unbreakable censorship of entanglement by resorting to a (stricter) censorship on coherence, using the RD map $\Delta^{\otimes 2}$ from Eq. \eqref{eq:RDcoh}, and redefining $\mathcal{F}^\prime(A)$ as the set of (bipartite) incoherent states.
    This suffices because there cannot be entanglement without coherence, and censorship of coherence is unbreakable. 

	In general, the RC map in Eq. \eqref{eq:RDent} does not commute with resource non-generating (i.e., non-entangling \cite{HN03,VH05}) noise $\Phi$ on the set of free states $\mathcal{F}(A)$.
	Thus, the agent must be worried that their action $\Delta^\prime$ introduces additional errors into the message, despite a sender $A$ using the network permissibly, i.e., sending only messages $\sigma\in\mathcal{F}^\prime(A)$ according to the user agreement. 
 
	To illustrate the undesirable effects such noise can have on a state $\sigma=\sum_{x}p_x\ket{x}\bra{x}_{\mathcal{X}_A}\otimes \sigma^x_{\mathcal{Y}_A}$, consider a swap channel $\Phi(\rho\otimes\sigma)=\sigma\otimes\rho$. 
	Clearly, the noise $\Phi$ is resource non-generating (non-entangling).
	After the agent, applies the RC map in Eq. \eqref{eq:RDent}, a receiver $B$ is left with the incoherent state
	\begin{equation}
		(\Delta^\prime\circ\Phi)(\sigma)
        {=}\sum_{x,y}p_x\bra{y}\sigma^x_{\mathcal Y_A}\ket{y} \ket{y}\bra{y}_{\mathcal{X}_B}\otimes\ket{x}\bra{x}_{\mathcal{Y}_B}.
	\end{equation}
	This leaves $B$ without any quantum properties left in the state.
	While a swap channel might be a non-intuitive form of noise, mixing up a bit sequences is certainly possible, and it highlights some of the difficulties (quantum) network providers (i.e., the agent) faces when trying to establish a quantum censorship, while still keeping the network operable.

    A physical realization of the censorship can be devised using linear optics. 
    Two senders $A_1$ and $A_2$ prepare the entangled state
    \begin{equation*}
        \rho_{A_1 A_2}=\ket{H}\bra{H}_{\mathcal{X}_{A_{1}}}\otimes\ket{V}\bra{V}_{\mathcal{X}_{A_{2}}}\otimes\ket{\phi^+}\bra{\phi^+}_{\mathcal{Y}_{A_{1}}\mathcal{Y}_{A_{2}}}.
    \end{equation*}
    Here, horizontal and vertical polarization $\ket{H}$ and $\ket{V}$ define the incoherent basis and $\ket{\phi^+}=(\ket{HH}+\ket{VV})/\sqrt{2}$ is a Bell state.
    The agent tries to prevent the transmission of entanglement to receivers $B_1$ and $B_2$ using the RC map \eqref{eq:RDent}. 
    There, the agent performs a (nonselective) polarization measurement on the subsystems $\mathcal{X}_{A_{1}}$ and $\mathcal{X}_{A_{2}}$, thus realizing a dephasing with respect to the states $\ket{H}$ and $\ket{V}$ [see Eq. \eqref{eq:RDcoh}].
    See also Ref. \cite{UAC18} for experimental results on a controllable dephasing channel. 
    On the other hand, the agent does not apply any operation to the subsystems $\mathcal{Y}_{A_{1}}$ and $\mathcal{Y}_{A_{2}}$ [see Eq. \eqref{eq:RDent}].
    It follows that the state $\rho_{A_1 A_2}$ is not altered by the RC map $\Delta^\prime_{A_1}\otimes\Delta^\prime_{A_2}$. 
    The entangled state $\ket{\phi^+}\bra{\phi^+}$, thus reaches the receivers $B_1$ and $B_2$.
    The censorship has been broken.

    \section{Conclusion}

    We introduced a protocol for quantum censorship. 
    Therein, an agent can apply RC or RD maps locally to each sender-receiver connection, thus prohibiting the distribution of resource states through the network at will.
    By using such maps, the protocol avoids any measurements of a state, which would render the network unusable for quantum communication.
    Since RD maps exist only for affine QRTs, RC maps were utilized to impose censorship on an affine subspace of free states.
    Our necessary and sufficient conditions reveal under which censorship is unbreakable.
    This was the case for the QRT of coherence and reference frames while the censorship of entanglement could be overcome.

    Quantum censorship protocol becomes especially urgent once we are confronted with the emergence of a widely accessible quantum internet.
    See, for instance, Refs. \cite{NB22,ZS22} for recent experimental progress in this direction.
    On the one hand, quantum censorship allows governmental authorities to prevent ill-intentioned parties from quantum-cryptographic attacks.
    On the other hand, commercial enterprises may offer free classical services but want to charge premium fees for quantum communication.
    Also, future studies of more advanced (possibly non-local) censorship protocols might be a worthwhile endeavor.
    We hope our work paves the way for a discussion of quantum censorship as a previously unappreciated tool in quantum communication.

    \acknowledgments
    This work received financial support through Ministry of Culture and Science of the State of North Rhine-Westphalia (Project PhoQC).

\end{document}